\documentclass{article}
\usepackage{spconf,amsmath,amssymb,graphicx,epsfig, cite,algorithm,algorithmic,epstopdf, url}
\usepackage[utf8]{inputenc}
\usepackage[mathscr]{euscript}
\usepackage{color}
\usepackage{bm}
\usepackage{amsthm}
\usepackage[keeplastbox]{flushend}





\def\minwrt[#1]{\underset{#1}{\mathrm{minimize }}}
\def\argminwrt[#1]{\underset{#1}{\text{arg min }}}
\def\argmaxwrt[#1]{\underset{#1}{\text{arg max }}}
\def\maxwrt[#1]{\underset{#1}{\text{maximize }}}
\def\maxemphwrt[#1]{\underset{#1}{\text{\emph{maximize} }}}

\def\mminwrt[#1]{\underset{#1}{\mathrm{min }}}

\newtheorem{theorem}{Theorem}
\newtheorem{remark}{Remark}

\newtheorem{corollary}{Corollary}

\newtheorem{question}{Question}
\newtheorem{conjecture}{Conjecture}



\newcommand{\norm}[1]{\left\lVert#1\right\rVert}
\newcommand{\normtv}[1]{\left\lVert#1\right\rVert_{\mathrm{TV}}}
\newcommand{\abs}[1]{\left|#1\right|}

%
\def\RC{{\mathbb{C}}}
\def\RR{{\mathbb{R}}}
\def\RZ{{\mathbb{Z}}}

\newcommand{\integerset}[1]{\RZ_{#1}}

\newcommand{\expop}{\mathbb{E}}
\newcommand{\expect}[1]{\expop\left(#1\right)}
\newcommand{\freq}{\theta}

\newcommand{\measureset}[1]{\mathcal{M}_+\left(#1\right)}

\newcommand{\complexmeasureset}[1]{\mathbb{C}\mathcal{M}\left(#1\right)}


\newcommand{\cpol}{\Lambda}

\newcommand{\contfuncs}{\mathcal{C}}

\newcommand{\freqband}{\mathcal{I}_B}



\newcommand{\uncertaintyval}{p^{\star}}




%
        \makeatletter
        \def\fps@eqnfloat{!t}
        \def\ftype@eqnfloat{4}
        
        \newenvironment{eqnfloat*}
               {\@dblfloat{eqnfloat}}
               {\end@dblfloat}
        \makeatother
%



%
{%
\end{oldthebibliography}%
}%
\hyphenation{op-tical net-works semi-conduc-tor}

\begin{document}
\title{Quantifying and Computing Covariance Uncertainty}
\name{Filip Elvander$^*$, Johan Karlsson$^{\dagger}$, Toon van Waterschoot$^*$\thanks{This work was supported in part by the Research Foundation -- Flanders (FWO) grant 12ZD622N, the Swedish Research Council grant 2020-03454, as well as by the European Research Council under the European Union's Horizon 2020 research and innovation program / ERC Consolidator Grant: SONORA (no. 773268). This paper reflects only the authors' views and the Union is not liable for any use that may be made of the contained information.}}
\address{
$^*$Dept. of Electrical Engineering (ESAT-STADIUS), KU Leuven, Belgium\\
$^\dagger$Dept. of Mathematics, KTH Royal Institute of Technology, Sweden\\
emails: firstname.lastname@esat.kuleuven.be,  firstname.lastname@math.kth.se\vspace{-4pt}}

\maketitle
\begin{abstract}
In this work, we consider the problem of bounding the values of a covariance function corresponding to a continuous-time stationary stochastic process or signal. Specifically, for two signals whose covariance functions agree on a finite discrete set of time-lags, we consider the maximal possible discrepancy of the covariance functions for real-valued time-lags outside this discrete grid. Computing this uncertainty corresponds to solving an infinite dimensional non-convex problem. However, we herein prove that the maximal objective value may be bounded from above by a finite dimensional convex optimization problem, allowing for efficient computation by standard methods. Furthermore, we empirically observe that for the case of signals whose spectra are supported on an interval, this upper bound is sharp, i.e., provides an exact quantification of the covariance uncertainty.
\end{abstract}
\vspace{2mm}
\begin{keywords}
Covariance estimation, covariance interpolation, uncertainty bounding
\end{keywords}
\section{Introduction}
Modeling and estimation of the covariance function of wide-sense stationary signals forms an intrinsic and fundamental component of many signal processing algorithms and applications. For example, estimates of temporal and spatial covariance functions are used in radar, sonar, and audio signal processing \cite{KrimV96} for localization and tracking \cite{ElvanderJK18_66,ElvanderHJK20_171} and for performing noise reduction \cite{GannotVMGO17_25,ElvanderAJW19_eusipco}. For the case of temporally narrow-band signals, it is often exploited that time-delays may be represented by wave-form phase-shifts or, equivalently, unit-modulus scaling of the covariance function, as used in, e.g., the Capon method \cite{Capon69} or subspace methods such as MUSIC and ESPRIT \cite{Schmidt79,RoyPK86_34}. For broad-band signals, recent analogous spatial spectral estimators have been proposed that rely on so-called polynomial eigenvalue decompositions \cite{WeissPP18_66,WeissBACPC15_eusipco}, in addition to more classical approaches such as decomposition of the signal into narrow-band components using filtering \cite{Bohme86_10} or beamforming methods such as the steered response power estimator \cite{BrandsteinS97_icassp,DietzenSW21_arxiv}. However, for broad-band signals, the issue of time-delays not being integer multiples of the sampling frequency in general array processing scenarios becomes apparent, not least when generating simulations \cite{ElvanderK21_arxiv}. This then requires rounding or truncation of time-delays \cite{NunesMLBCGSL14_62} or using interpolation, e.g., by means of fractional delay filters which in practice can only be approximate \cite{LaaksoVKL96_13,ValimakiL00_icassp}. Correct interpolation may then have considerable impact on the success of the signal processing task, as the correlation structure is directly related to, e.g., the spatial locations of signal sources \cite{AlmrahWL01_eusipco,DietzenSW21_arxiv,RosseelW21_i3da}. For this reason, one may ask to which extent the covariance function of a signal is determined by its samples at a finite set of discrete lags. It may here be noted that in array processing, the set of lags will always be finite due to the finite number of sensors, i.e., the regime of an infinite sequence of samples as considered in Shannon-Nyquist sampling theorems does not apply.
The related problem of gauging uncertainty in spectral estimation, and in particular the problem of choosing an appropriate metric for the uncertainty, has been considered in \cite{KarlssonG13_58}.
In this work, we consider the problem of quantifying the uncertainty of a continuous-time covariance function given observations of it at a finite set of discrete time-lags. In particular, we study the maximal possible discrepancy in the second-order statistics of any two bandlimited signals whose covariance functions agree on this discrete grid. This is formulated as a worst-case problem where only the bandwidth of the signals as well as the total power are assumed to be known. Although computing this uncertainty or maximal discrepancy corresponds to solving a non-convex optimization problem on an infinite dimensional function space, we show that an upper bound can be constructed by means of a finite-dimensional convex program. We characterize the solution of this problem, as well as its dual, and furthermore conjecture that for the interesting case of the signal band being an interval, the upper bound is actually sharp, i.e., exact. The findings are demonstrated in numerical examples, empirically supporting the conjecture.
\newpage
\section{Covariance uncertainty}
Consider a wide-sense stationary, complex circularly symmetric and zero-mean stochastic process $x$ on the real line. The covariance function $r_x: \RR \to \RC$ is given by
\begin{align*}
	r_x(\tau) \triangleq \expect{x(t)\overline{x(t-\tau)}} =  \int_\RR e^{i2\pi \freq \tau} d\mu_x(\freq),
\end{align*}
for $\tau \in \RR$, where $\mu_x$ is the power spectrum of $x$, and where $i \triangleq \sqrt{-1}$ is the imaginary unit. Herein, we will assume that the spectrum of $x$ is supported on  $\freqband$, which is a union of compact intervals, i.e.,
\begin{align*}
	\int_{\freq \notin \freqband} d\mu_x(\freq) = 0.
\end{align*}
It may here be noted that $\mu_x \in \measureset{\freqband}$, i.e., an element of the set of non-negative measures\footnote{That is, $\mu_x$ may be a generalized integrable function containing, e.g., Dirac deltas.} on $\freqband$. Assume that one has access to the values (or estimates thereof) of the covariance function $r_x$ at a finite, discrete set of lags $\tau \in \integerset{n} \triangleq \left\{-n,-n+1,\ldots,n-1,n  \right\}$ for some integer $n$. Then, one may consider the following question.
\begin{question}\label{qe:cov_uncertainty}
How much can $r_x(\tau)$ for $\tau \in \RR \setminus \integerset{n}$ vary given the spectral support $\freqband$?
\end{question}
Specifically, the question concerns the possible dissimilarity of the second-order statistics of any two processes, or signals, whose covariance functions agree on a given finite set of time-lags. Question~\ref{qe:cov_uncertainty} may be answered by the following optimization problem:
\begin{align}
	\maxwrt[\mu, \nu \in \measureset{\freqband}]& \abs{\int_{\freqband} e^{i2\pi \freq \tau} (d\mu(\freq) - d\nu(\theta))} \label{eq:unspecified_moment_problem} \\
	\text{subject to }& \int_{\freqband} e^{i2\pi \freq k} \left(d\mu(\freq) -d\nu(\freq)\right)= 0 \;,\;\forall k \in \integerset{n}, \notag\\
	& \int_{\freqband} \left(d\mu(\freq) +d\nu(\freq)\right) = 2\sigma^2. \notag
\end{align}
Here, the objective function is the absolute difference at lag $\tau$ between two covariance functions $r_\mu(\tau) \!=\! \int_{\freqband}\! e^{i2\pi \freq \tau} d\mu(\freq)$ and $r_\nu(\tau) = \int_{\freqband} e^{i2\pi \freq \tau} d\nu(\freq)$, with the constraints ensuring that $r_\mu$ and $r_\nu$ agree on $\integerset{n}$. The final constraint ensures that the problem in \eqref{eq:unspecified_moment_problem} is bounded, as the total power of both signals is constrained to be $\sigma^2$. It may here be noted that the values of $r_\mu(k)$ and $r_\nu(k)$ for $k \in \integerset{n}\setminus 0$ are not specified; it is only required that they are equal. Thus, the problem in \eqref{eq:unspecified_moment_problem} corresponds to a worst-case scenario that can be seen as a maximum over all possible covariance functions. The total power $\sigma^2$ then serves as a simple scaling of the problem. Thus, \eqref{eq:unspecified_moment_problem} models the inherent uncertainty in a measurement setup before any measurements are made: the only data in the problem is the expected signal band $\freqband$ and the total power~$\sigma^2$. Being able to compute \eqref{eq:unspecified_moment_problem} would then allow for identifying limitations in, e.g., broad-band array processing. Specifically, as a particular array geometry gives rise to a certain set of (real-valued) time-delays from source to receiver, \eqref{eq:unspecified_moment_problem} quantifies the uncertainty induced by discrete spatio-temporal sampling. This information may then be used as to, e.g., modify the array geometry or determine where to optimally place additional sensors.

It may be noted that \eqref{eq:unspecified_moment_problem} is an infinite-dimensional problem, as it considers optimization on the cone $\measureset{\freqband}$. Furthermore, the problem is non-convex due to the maximization of a convex objective, preventing straight-forward finite-dimensional approximation by gridding. However, the maximal objective of \eqref{eq:unspecified_moment_problem} may be upper-bounded by means of a convex program, as described next.
\section{A computable upper bound}
In order to compute an upper bound to \eqref{eq:unspecified_moment_problem}, define the shorthand $g_\tau(\freq) \triangleq e^{i2\pi\freq\tau}$. Furthermore, let $\contfuncs(\freqband)$ be the set of complex-valued continuous functions on $\freqband$ equipped with the norm $\norm{c} = \sup_{\freq\in\freqband} \abs{c(\freq)}$ for $c\in \contfuncs(\freqband)$. Furthermore, define the subspace $\cpol_n \subset \contfuncs(\freqband)$ as
\begin{align*}
	\cpol_n \triangleq \left\{ Q \mid Q(\freq) = \sum_{k=-n}^n \lambda_k e^{i2\pi \freq k} \;,\; \lambda_k \in \RC \right\},
\end{align*}
i.e., the set of complex trigonometric polynomials of degree at most $n$. Then, the following theorem holds.
\begin{theorem} \label{thm:main_thm}
The maximal objective value of \eqref{eq:unspecified_moment_problem} is upper-bounded by
\begin{align} \label{eq:approx_problem}
	\mminwrt[Q \in \cpol_n] 2\sigma^2\norm{g_\tau - Q}.
\end{align}
\end{theorem}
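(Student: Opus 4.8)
The plan is to establish \eqref{eq:approx_problem} as a weak-duality-type upper bound, exploiting the fact that the moment constraints in \eqref{eq:unspecified_moment_problem} force every element of $\cpol_n$ to integrate to zero against the difference of the two measures. To this end, I would first introduce the signed measure $\xi \triangleq \mu - \nu \in \signedmeasureset{\freqband}$ and rephrase the equality constraints as the single statement that all moments $\int_{\freqband} e^{i2\pi\freq k} d\xi(\freq)$ vanish for $k \in \integerset{n}$. Since any $Q \in \cpol_n$ is by definition a finite linear combination of the functions $\freq \mapsto e^{i2\pi\freq k}$ with $k \in \integerset{n}$, linearity of the integral immediately yields $\int_{\freqband} Q \, d\xi = 0$ for every such $Q$.

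The key consequence is that the objective is invariant under subtracting any $Q \in \cpol_n$ from $g_\tau$, i.e.
\begin{align*}
	\abs{\int_{\freqband} g_\tau \, d\xi} = \abs{\int_{\freqband} (g_\tau - Q)\, d\xi}.
\end{align*}
I would then bound the right-hand side by moving the absolute value inside the integral and invoking the definition of the sup-norm, obtaining $\abs{\int_{\freqband}(g_\tau - Q)\,d\xi} \le \norm{g_\tau - Q}\, \normtv{\xi}$, where $\normtv{\xi}$ denotes the total variation of $\xi$.

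The remaining step, and the only place where non-negativity of $\mu,\nu$ and the power constraint enter, is to bound $\normtv{\xi}$. Because $\mu,\nu \in \measureset{\freqband}$ are non-negative, for every Borel set $A$ one has $\abs{\xi(A)} = \abs{\mu(A) - \nu(A)} \le \mu(A) + \nu(A)$, so that $\normtv{\xi} \le (\mu+\nu)(\freqband) = 2\sigma^2$ by the final constraint of \eqref{eq:unspecified_moment_problem}. Chaining the three estimates gives $\abs{\int_{\freqband} g_\tau \, d\xi} \le 2\sigma^2 \norm{g_\tau - Q}$ for every feasible pair $(\mu,\nu)$ and every $Q \in \cpol_n$. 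Since the left-hand side does not depend on $Q$, I may minimize the right-hand side over $Q \in \cpol_n$ (the minimum being attained, as it is a best-approximation problem onto a finite-dimensional subspace) and then take the supremum over feasible pairs, which yields the claimed bound.

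I do not expect any single step to be a serious obstacle, as each inequality is an application of a standard fact: linearity of the integral, the triangle inequality for integration against a signed measure, and the dominance $\abs{\mu-\nu}\le\mu+\nu$ for non-negative measures. The only subtlety worth stating carefully is the total-variation dominance, which is precisely where the structural assumption $\mu,\nu\ge 0$ is used and which is what turns the power constraint into the factor $2\sigma^2$. It is also worth emphasizing that the argument yields only an upper bound; establishing sharpness would require exhibiting a feasible pair attaining it, which is exactly the harder matter deferred to the conjecture.
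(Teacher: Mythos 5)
Your proposal is correct, and it takes a genuinely more elementary route than the paper's. The paper invokes the full duality relation between $\contfuncs(\freqband)$ and $\complexmeasureset{\freqband}$ (citing Luenberger), i.e., the strong-duality identity $\min_{Q\in\cpol_n}\norm{g_\tau - Q} = \sup_{\psi \in B\cpol_n^\perp}\mathrm{Re}\left(\langle\psi, g_\tau\rangle\right)$ in \eqref{eq:duality_relation}, rewrites the right-hand side as the moment problem \eqref{eq:complex_moment_problem}, and then shows that the phase-rotated, normalized difference $\psi = \frac{e^{-i\varphi}}{2\sigma^2}(\mu_0-\nu_0)$ of an optimal pair is feasible for it with objective value $\uncertaintyval/2\sigma^2$. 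You instead prove only the weak-duality direction, directly: the moment constraints annihilate every $Q\in\cpol_n$, so $\abs{\int_{\freqband} g_\tau\, d\xi} = \abs{\int_{\freqband}(g_\tau - Q)\, d\xi} \leq \norm{g_\tau - Q}\normtv{\xi}$ for $\xi = \mu - \nu$, and non-negativity of $\mu,\nu$ together with the power constraint gives $\normtv{\xi} \leq 2\sigma^2$ --- a step the paper also needs but leaves implicit when asserting that its $\psi$ lies in the unit ball, and which you correctly identify as the only place where $\mu,\nu \geq 0$ enters. Your version requires no duality theorem, no phase factor $e^{-i\varphi}$ (the modulus estimate absorbs it), and no attainment of the primal optimum (the paper speaks of a solution $(\mu_0,\nu_0)$, whereas you take a supremum over feasible pairs, which is slightly cleaner). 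What the paper's heavier machinery buys is downstream value: the equality \eqref{eq:duality_relation} is precisely what Corollary~\ref{cor:alignment} uses to conclude that an optimal $\psi_0$ is aligned with $g_\tau - Q_0$ and supported on the peak set $\Omega$, and it is the natural framework for attacking the sharpness question of Conjecture~\ref{conj:equality}; for Theorem~\ref{thm:main_thm} alone, your elementary chain of inequalities suffices.
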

Here, it may be noted that in contrast to \eqref{eq:unspecified_moment_problem}, the approximation problem in \eqref{eq:approx_problem} is both convex and finite-dimensional due to the finite dimension of the subspace $\cpol_n$. To prove Theorem~\ref{thm:main_thm}, we will use the duality relation between $\contfuncs(\freqband)$ and $\complexmeasureset{\freqband}$, i.e., the set of complex-valued measures on $\freqband$ equipped with the total variation norm
\begin{align*}
	\normtv{\psi} = \int_{\freqband} \abs{d\psi(\freq)}.
\end{align*}
\begin{proof}
By the duality relation, it holds that \cite{Luenberger69}
\begin{align} \label{eq:duality_relation}
	\mminwrt[Q\in \cpol_n] \norm{g_\tau - Q} = \sup_{\psi \in B\cpol_n^\perp} \text{Re}\left( \langle \psi,g_\tau \rangle \right),
\end{align}
where $\langle \psi,g_\tau \rangle = \int_{\freqband} g_\tau(\freq)d\psi(\freq)$, and where $B\cpol_n^\perp$ is the intersection of the unit ball $\left\{ \psi \in \complexmeasureset{\freqband} \mid \normtv{\psi} \leq 1 \right\}$ and the annihilator $\cpol_n^\perp$ of $\cpol_n$, i.e.,
\begin{align*}
	\cpol_n^\perp &\triangleq \left\{ \psi \in \complexmeasureset{\freqband} \mid \langle \psi, Q \rangle = 0 \;\; \forall Q \in \cpol_n  \right\} \\
	&=  \left\{ \psi \in \complexmeasureset{\freqband} \mid \int_{\freqband} e^{i2\pi\freq k}d\psi(\freq) = 0 \;\; \forall k\in \integerset{n}  \right\},
\end{align*}
where the second equality follows from that $\cpol_n$ is a finite-dimensional subspace. Thus, the right-hand side of \eqref{eq:duality_relation} can be written as
\begin{equation} \label{eq:complex_moment_problem}
\begin{aligned}
	\sup_{\psi \in \complexmeasureset{\freqband}}& \mathrm{Re}\left( \int_{\freqband} g_\tau(\freq)d\psi(\freq)  \right) \\
	\text{subject to }& \int_{\freqband} e^{i2\pi \freq k} d\psi(\freq)= 0 \;,\;\forall k \in \integerset{n}, \\
	& \int_{\freqband} \abs{d\psi(\freq)} \leq 1.
\end{aligned}
\end{equation}
Clearly, if $(\mu_0,\nu_0)$ is a solution to \eqref{eq:unspecified_moment_problem} with objective value $\uncertaintyval$,  then
\begin{align*}
	\psi = \frac{e^{-i\varphi}}{2\sigma^2} (\mu_0-\nu_0),
\end{align*}
with $\varphi = \mathrm{arg}\left( \int_{\freqband} e^{i2\pi \freq \tau} (d\mu_0(\freq) - d\nu_0(\theta)) \right)$, is a feasible point of \eqref{eq:complex_moment_problem} with objective value $\uncertaintyval/2\sigma^2$, proving that \eqref{eq:approx_problem} indeed provides an upper bound for \eqref{eq:unspecified_moment_problem}.
\end{proof}
Thus, Theorem~\ref{thm:main_thm} provides a way of computing an upper bound to the covariance uncertainty problem in \eqref{eq:unspecified_moment_problem} by means of a convex optimization program.
Furthermore, for the case when $\freqband$ is symmetric around zero we may characterize an optimal primal-dual pair $(Q_0,\psi_0)$ solving \eqref{eq:approx_problem} and \eqref{eq:complex_moment_problem} according to the following corollary. Here, we define for functions $h$ defined on $\freqband$ the reflection and conjugation operation $h \mapsto h^*$ as $h^*(\freq) = \overline{h(-\freq)}$.
\begin{corollary}\label{cor:alignment}
Let $\freqband$ be symmetric around zero. Then, the optimal $Q_0$ has real coefficients. Furthermore, the optimal $\psi_0$ satisfies $\psi_0 = \psi_0^*$ and can be written as $\psi_0 = \hat{\psi}_0 + i\breve{\psi}_0$, where $\hat{\psi}_0$ and $\breve{\psi}_0$ are real-valued (signed) measures satisfying
\begin{align*}
	\hat{\psi}_0(\freq) = \hat{\psi}_0(-\freq) \;,\; \breve{\psi}_0(\freq) = -\breve{\psi}_0(-\freq).
\end{align*}
The optimal $\psi_0$ is aligned with $g_\tau - Q_0$ and is supported on a subset of
\begin{align*}
	\Omega \triangleq \left\{ \freq \mid \abs{g_\tau(\freq) - Q_0(\freq)} = \norm{g_\tau - Q_0}  \right\},
\end{align*}
which is a point-set symmetric around $\freq = 0$.
\end{corollary}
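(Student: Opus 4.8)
The plan is to exploit the fact that the reflection--conjugation $h\mapsto h^*$ is a symmetry of the entire problem, and then to read off the alignment and support properties from strong duality. First I would record the elementary facts that $g_\tau^* = g_\tau$ (since $\overline{g_\tau(-\freq)} = e^{i2\pi\freq\tau}$), that $Q\mapsto Q^*$ maps $\cpol_n$ into itself by conjugating its coefficients (so that $Q = Q^*$ is equivalent to $Q$ having real coefficients), and that, because $\freqband$ is symmetric around zero, both $\norm{\cdot}$ and $\normtv{\cdot}$ are invariant under $\ast$. Extending $\ast$ to measures through $d\psi^*(\freq) = \overline{d\psi(-\freq)}$, a change of variables shows that $\ast$ preserves the annihilator $\cpol_n^\perp$ and that $\mathrm{Re}\langle\psi^*, g_\tau\rangle = \mathrm{Re}\langle\psi, g_\tau\rangle$. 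Consequently $\ast$ maps optimal solutions to optimal solutions on both the primal side \eqref{eq:approx_problem} and the dual side \eqref{eq:complex_moment_problem}. Since both feasible sets are convex and the two objectives are, respectively, convex and linear, the symmetrized points $\tfrac12(Q_0 + Q_0^*)$ and $\tfrac12(\psi_0 + \psi_0^*)$ remain optimal while being $\ast$-invariant; this yields the real-coefficient claim for $Q_0$, and, writing $\psi_0 = \hat\psi_0 + i\breve\psi_0$, the condition $d\psi_0(\freq) = \overline{d\psi_0(-\freq)}$ immediately forces $\hat\psi_0$ to be even and $\breve\psi_0$ to be odd.

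For the alignment and support statements I would use strong duality, i.e. the equality of optimal values already provided by \eqref{eq:duality_relation}, together with attainment of the dual supremum (the unit ball of $\complexmeasureset{\freqband}$ is weak-$*$ compact by Banach--Alaoglu, $\cpol_n^\perp$ is weak-$*$ closed as the intersection of finitely many weak-$*$ continuous constraints, and $\psi\mapsto\mathrm{Re}\langle\psi, g_\tau\rangle$ is weak-$*$ continuous). For the optimal pair one has $\langle\psi_0, Q_0\rangle = 0$ because $\psi_0\in\cpol_n^\perp$, so
\begin{align*}
\norm{g_\tau - Q_0} &= \mathrm{Re}\langle\psi_0, g_\tau\rangle = \mathrm{Re}\int_{\freqband}(g_\tau - Q_0)\,d\psi_0 \\
&\leq \int_{\freqband}\abs{g_\tau - Q_0}\,\abs{d\psi_0} \leq \norm{g_\tau - Q_0}\normtv{\psi_0} \leq \norm{g_\tau - Q_0}.
\end{align*}
Equality must therefore hold throughout. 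The last inequality being tight gives $\normtv{\psi_0} = 1$; the middle inequality being tight forces $\abs{g_\tau(\freq) - Q_0(\freq)} = \norm{g_\tau - Q_0}$ for $\abs{\psi_0}$-almost every $\freq$, i.e. $\psi_0$ is supported on a subset of $\Omega$; and the first inequality being tight forces the phase of $(g_\tau(\freq)-Q_0(\freq))\,d\psi_0(\freq)$ to be constant and real-positive, which is precisely the alignment of $\psi_0$ with $g_\tau - Q_0$.

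Finally, the symmetry of $\Omega$ follows from the real-coefficient property: since $Q_0 = Q_0^*$ and $g_\tau = g_\tau^*$, we have $(g_\tau - Q_0)^* = g_\tau - Q_0$, whence $\abs{(g_\tau - Q_0)(-\freq)} = \abs{(g_\tau - Q_0)(\freq)}$, so $\abs{g_\tau - Q_0}$ is even and the set $\Omega$ on which it attains its maximum is symmetric about $\freq = 0$. I expect the main obstacle to be the measure-theoretic step in the second paragraph: converting the scalar equalities in the displayed chain into the pointwise ($\abs{\psi_0}$-a.e.) statements about the magnitude and phase of $g_\tau - Q_0$ on the support of $\psi_0$, which calls for the polar (Radon--Nikodym) decomposition of the complex measure $\psi_0$ rather than a naive density argument, together with the verification that the dual optimum is genuinely attained so that such a $\psi_0$ exists in the first place.
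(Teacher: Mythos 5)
Your proposal follows essentially the same route as the paper's proof: the real-coefficient property of $Q_0$ and the symmetry $\psi_0 = \psi_0^*$ are obtained by exactly the paper's symmetrization argument (averaging a candidate with its $\ast$-reflection, using the triangle inequality on the primal side and convexity of the total-variation norm on the dual side), and the alignment and support claims are extracted from the duality relation \eqref{eq:duality_relation}. On the duality side you are in fact more thorough than the paper, which simply asserts that \eqref{eq:duality_relation} ``implies that $\psi_0$ is only supported where $\abs{g_\tau - Q_0}$ is maximal'' with a citation; your verification of dual attainment via Banach--Alaoglu (the unit ball is weak-$*$ compact, $\cpol_n^\perp$ is cut out by finitely many weak-$*$ continuous functionals, the objective is weak-$*$ continuous) followed by the chain of inequalities and the polar decomposition of $\psi_0$ is precisely the argument the paper leaves implicit, and it is correct. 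A negligible caveat: extracting $\normtv{\psi_0}=1$ from the tightness of the last inequality requires $\norm{g_\tau - Q_0}>0$, which holds whenever $\tau \notin \integerset{n}$ since $g_\tau$ is then linearly independent of $\cpol_n$ on a set of positive measure.

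The one genuine omission is the claim that $\Omega$ is a \emph{point-set}, i.e., a set of isolated points: you prove only its symmetry about $\freq = 0$, and nothing in your duality chain rules out $\Omega$ containing intervals a priori. The paper closes this with a short analyticity argument: $g_\tau - Q_0$ is a linear combination of finitely many sinusoids, so $\abs{g_\tau - Q_0}^2$ is real-analytic in $\freq$ and can attain its maximum on a set with an accumulation point only if it equals $\norm{g_\tau - Q_0}^2$ identically on the relevant component of $\freqband$; outside this degenerate case the maximizers must be isolated. You should append this one-line step (a direct consequence of $g_\tau - Q_0$ being a finite exponential sum) to fully establish the statement as written.
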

\begin{proof}
Consider any $Q \in \cpol_n$. Then, $\tilde{Q} = ( Q + Q^* )/2 \in \cpol_n$, and as $g_\tau^* = g_\tau$,
\begin{align*}
	\norm{g_\tau - \tilde{Q}} &= \norm{\frac{1}{2}(g_\tau -Q) +  \frac{1}{2}(g_\tau^* -Q^*)} \\
	&\leq \frac{1}{2}\norm{g_\tau -Q} + \frac{1}{2}\norm{g^*_\tau -Q^*}\\
	&= \norm{g_\tau - Q}
\end{align*}
as \mbox{$\abs{g^*_\tau(\freq)\!-\!Q^*(\freq)} \!=\! \abs{\overline{g_\tau(\!-\freq)} \!-\!\overline{Q(\!-\freq)}} = \abs{g_\tau(\!-\freq) \!-\!Q(\!-\freq)}$}. Thus, any candidate solution $Q$ can always be improved to a solution $\tilde{Q}$ satisfying $\tilde{Q} = \tilde{Q}^*$. As
\begin{align*}
	\tilde{Q}(\freq) = \sum_{k=-n}^n \lambda_k e^{i2\pi \freq k } \;,\; \tilde{Q}^*(\freq) = \sum_{k=-n}^n \overline{\lambda_k} e^{i2\pi \freq k },
\end{align*}
this implies $\lambda_k = \overline{\lambda_k} \in \RR$.

The alignment follows directly from \eqref{eq:duality_relation}, which for the case of continuous functions and complex measures implies that $\psi_0$ is only supported where $\abs{g_\tau - Q_0}$ is maximal. As $g_\tau - Q_0$ is a linear combination of finitely many sinusoids, it follows that $\abs{g_\tau - Q_0}$ can only be maximal on an interval if it is identically equal to $\norm{g_\tau - Q_0}$. Thus, the maximizing frequencies $\Omega$ constitute a set of isolated points. The symmetry of $\Omega$ follows from the fact that $Q_0$ has real coefficients, implying $g_\tau-Q_0 = (g_\tau - Q_0)^*$. To show $\psi_0 = \psi_0^*$, consider any feasible $\psi$ and construct $\tilde{\psi} = (\psi + \psi^*)/2$. Defining the functional $f: \complexmeasureset{\freqband} \to \RR$ as 
\begin{align*}
	f(\psi) = \mathrm{Re}\left( \int_{\freqband} g_\tau(\freq)d\psi(\freq)  \right),
\end{align*}
it is readily verified that $f(\tilde{\psi}) = f(\psi)$ due to the symmetric integration set and that $g_\tau = g_\tau^*$. Furthermore, $\tilde{\psi}$ clearly satisfies the linear constraints and by the convexity of the total variation norm, $||\tilde{\psi}||_{\mathrm{TV}} \leq \normtv{\psi}$. By the linearity of $f$, $\tilde{\psi}$ may then be scaled as to obtain a feasible solution with improved objective. The decomposition into a symmetric real and antisymmetric imaginary part follows directly.
\end{proof}
\begin{remark}
It may be noted that any problem where $\freqband$ is symmetric around a center frequency $\freq_c$ can be mapped to an equivalent problem on the form considered in Corollary~\ref{cor:alignment}.
To see this, note that shifting the frequency axis by $\freq_c$ corresponds to a constant phase-shift of the objective and constraints of \eqref{eq:unspecified_moment_problem}, thus not affecting neither objective value nor feasibility. Thus, a solution $(\mu,\nu)$ for the symmetric problem corresponds to a solution with center frequency $\freq_c$ by a shift $(\mu(\cdot+\freq_c),\nu(\cdot+\freq_c))$.
\end{remark}
Although Theorem~\ref{thm:main_thm} provides an upper bound on the covariance uncertainty we have empirically observed a stronger result for a special case of the symmetric sets considered in Corollary~\ref{cor:alignment}: when $\freqband$ is an interval, the bound appears to be sharp. We state this observation in the following conjecture.
\begin{conjecture} \label{conj:equality}
Let $\freqband$ be an interval. Then, the objective values of \eqref{eq:unspecified_moment_problem} and \eqref{eq:approx_problem} coincide.
\end{conjecture}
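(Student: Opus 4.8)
The plan is to show that the complex program \eqref{eq:complex_moment_problem} underlying the bound \eqref{eq:approx_problem} loses nothing when restricted to real-valued (signed) measures, and that such measures are exactly what the primal \eqref{eq:unspecified_moment_problem} can realize. First I would rewrite \eqref{eq:unspecified_moment_problem} in terms of the single signed measure $\eta = \mu - \nu \in \signedmeasureset{\freqband}$. Since $\mu,\nu \ge 0$ we have $\normtv{\eta} \le \int_{\freqband}(d\mu + d\nu) = 2\sigma^2$ and $\eta \perp \cpol_n$; conversely, any $\eta \in \signedmeasureset{\freqband}$ with $\eta \perp \cpol_n$ and $\normtv{\eta} \le 2\sigma^2$ is realized by the Jordan decomposition $\mu = \eta_+$, $\nu = \eta_-$, padded by a common nonnegative mass to restore $\int_{\freqband}(d\mu+d\nu) = 2\sigma^2$ without changing $\eta$ or the objective. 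Hence the value of \eqref{eq:unspecified_moment_problem} equals
\begin{align*}
	V_\RR \triangleq \sup\left\{ \abs{\int_{\freqband} g_\tau \, d\eta} \;\middle|\; \eta \in \signedmeasureset{\freqband},\; \eta \perp \cpol_n,\; \normtv{\eta} \le 2\sigma^2 \right\},
\end{align*}
while \eqref{eq:complex_moment_problem} and \eqref{eq:duality_relation} identify the value of \eqref{eq:approx_problem} with the analogous complex supremum $V_\RC = 2\sigma^2 \mminwrt[Q\in\cpol_n]\norm{g_\tau - Q}$. Thus $V_\RR \le V_\RC$ is Theorem~\ref{thm:main_thm}, and the conjecture is equivalent to the reverse inequality $V_\RC \le V_\RR$.

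Next I would reduce to a symmetric interval $\freqband = [-B,B]$ using the remark after Corollary~\ref{cor:alignment}, and then invoke the corollary: an optimal $Q_0$ has real coefficients, the extremal set $\Omega$ is a finite symmetric point-set, and the complex optimizer splits as $\psi_0 = \hat\psi_0 + i\breve\psi_0$ with $\hat\psi_0$ an even and $\breve\psi_0$ an odd real signed measure. Two structural facts drive the argument. First, writing the constraints $\int_{\freqband} e^{i2\pi\freq k} d\psi = 0$ for $k$ and $-k$ and conjugating shows that the real and imaginary parts of \emph{any} annihilating complex measure are themselves real measures in $\cpol_n^\perp$, each of total variation at most $\normtv{\psi}$. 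Second, for real-coefficient $Q$ the even part $\mathrm{Re}\,Q$ ranges over all cosine polynomials and the odd part $\mathrm{Im}\,Q$ over all sine polynomials of degree $n$, with \emph{independent} coefficients; consequently the sup-norm approximation of $g_\tau = \cos(2\pi\freq\tau) + i\sin(2\pi\freq\tau)$ decouples, at the level of free parameters, into an even (cosine) and an odd (sine) piece.

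The crux is then to show that $V_\RC$ is attained by a purely real or purely imaginary annihilating measure, which, being real up to the harmless global factor $i$, is feasible for $V_\RR$ and forces $V_\RC \le V_\RR$. Equivalently, I must show that at every global maximizer $\freq \in \Omega$ the aligned error $g_\tau(\freq) - Q_0(\freq)$ has a common phase type, e.g. is purely imaginary, so that $\breve\psi_0$ alone already achieves the objective. This is precisely what happens in the base case $n=0$, where $Q_0 \equiv \cos(2\pi B\tau)$ and $g_\tau - Q_0 = i\sin(2\pi B\tau)$ at $\freq = \pm B$, so that $\breve\psi_0 = \tfrac{1}{2}(\delta_{-B} - \delta_{B})$ attains $V_\RC = 2\sigma^2\abs{\sin(2\pi B\tau)}$. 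To establish this in general I would exploit that on the interval $[0,B]$ the families $\{\sin(2\pi\freq k)\}_{k=1}^n$ and $\{1\}\cup\{\cos(2\pi\freq k)\}_{k=1}^n$ are Chebyshev (Haar) systems: the best odd and even approximations equioscillate with strictly alternating signs, yielding real annihilating measures supported on $\Omega$ with the correct sign pattern and total variation. It is here that the interval hypothesis is essential, as for a union of intervals the Haar property, and with it the alternation that produces a \emph{real} extremal signature, generally fails.

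The step I expect to be the main obstacle is exactly this last one: ruling out a genuinely complex extremal signature, i.e., proving that the even and odd errors cannot both be active at the global maxima in a way that strictly requires complex phase alignment. For general complex Chebyshev approximation one may have $V_\RR < V_\RC$, so the argument must use the specific geometry of $g_\tau(\freq) = e^{i2\pi\freq\tau}$ together with total positivity of the sine and cosine kernels on an interval to pin down $\Omega$ and the extremal phases. Converting the strong numerical evidence into such a structural statement is what currently separates Theorem~\ref{thm:main_thm} from the claimed equality, and is the reason the result is stated only as a conjecture.
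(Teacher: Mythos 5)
First, a point of calibration: the paper does not prove this statement. It is stated precisely as Conjecture~\ref{conj:equality}, supported only by the numerical experiments (the gap in Figure~\ref{fig:gap} being at solver-tolerance level when $\freqband$ is an interval), and the conclusion explicitly defers a rigorous proof to future work. So there is no proof in the paper to compare against, and your attempt must stand on its own. On that basis the verdict is that you have written a sensible research program, not a proof --- and you say so yourself. Your preliminary reductions are correct and essentially reproduce what the paper does establish: the reformulation of \eqref{eq:unspecified_moment_problem} via $\eta = \mu - \nu$ is sound (the $k=0$ constraint forces $\eta_+$ and $\eta_-$ to have equal mass, so your padding argument restores the total-power constraint without changing the objective), the identification $V_\RC = 2\sigma^2 \mminwrt[Q\in\cpol_n]\norm{g_\tau - Q}$ is the paper's duality step \eqref{eq:duality_relation}, the inequality $V_\RR \le V_\RC$ is Theorem~\ref{thm:main_thm}, and your symmetry structure is Corollary~\ref{cor:alignment} together with the remark following it. None of this touches the reverse inequality $V_\RC \le V_\RR$, which is the entire content of the conjecture; the step you flag as ``the main obstacle'' is exactly the open problem, so the proposal has a genuine, self-acknowledged gap rather than a flaw of reasoning.

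Three concrete warnings about the route you sketch for closing that gap. (i) The splitting $\psi = \alpha + i\beta$ into real annihilating measures is correct (pairing the $k$ and $-k$ constraints does put $\alpha,\beta \in \cpol_n^\perp$), but it cannot by itself yield $V_\RC \le V_\RR$: in general one only has $\normtv{\alpha} + \normtv{\beta} \le \sqrt{2}\,\normtv{\psi}$ (take $d\psi = e^{i\pi/4}d\lambda$ for a positive $\lambda$), so the naive estimate gives at best $V_\RC \le \sqrt{2}\,V_\RR$, consistent with the classical fact that real and complex Chebyshev approximation errors genuinely differ for general targets. (ii) Your base case is overstated: for $n=0$ and $\freqband = [-B,B]$, the choice $Q_0 = \cos(2\pi B\tau)$ is optimal only while $2\pi B\tau \le \pi/2$, since the interior error $\abs{1 - Q_0} = 1 - \cos(2\pi B\tau)$ exceeds the endpoint error $\abs{\sin(2\pi B\tau)}$ as soon as $\tan(\pi B\tau) > 1$; for larger $\tau$ the extremal set and optimal constant change, and the case analysis you would induct from is already different. (iii) The even/odd ``decoupling'' holds only at the level of free parameters, as you hint: the objective is $\sup_\freq \abs{g_\tau(\freq) - Q(\freq)}$, which couples the cosine and sine errors inside a single modulus, so Haar-system equioscillation theorems for \emph{real} approximation do not directly determine the extremal signature of the \emph{complex} problem. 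These three points are precisely why the alternation argument does not close, and why the statement remains --- in the paper as in your write-up --- a conjecture rather than a theorem.
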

As an alternative to Theorem~\ref{thm:main_thm} and Conjecture~\ref{conj:equality}, it may be noted that \eqref{eq:unspecified_moment_problem} may be reformulated as the non-convex problem
\begin{align}
	\maxwrt[\mu, \nu \in \measureset{\freqband}, \phi\in \mathbb{T}]& \text{Re } \left(\phi\int_{\freqband} e^{i2\pi \freq \tau} (d\mu(\freq) - d\nu(\theta))\right)\label{eq:exact_bound_with_tideous_computation} \\
	\text{subject to }& \int_{\freqband} \!e^{i2\pi \freq k} \left(d\mu(\freq) -d\nu(\freq)\right)= 0 \;,\forall k \in \integerset{n}, \notag\\
	& \int_{\freqband} \left(d\mu(\freq) +d\nu(\freq)\right) = 2\sigma^2, \notag
\end{align}
where $\mathbb{T} \triangleq \{ \phi \in \RC \mid \abs{\phi} = 1 \}$. However, this problem is convex if $\phi$ is kept fixed, and thus the optimum of \eqref{eq:unspecified_moment_problem} may be obtained by solving the restricted convex problem for each $\phi \in \mathbb{T}$. In practice, an approximate solution can be computed for each $\phi$ in a fine grid on $\mathbb{T}$, although it may be noted that this is a tedious and computationally heavy method for computing the covariance uncertainty.
\section{Numerical illustrations}
%
\begin{figure}[t!]
        \centering
        \vspace{1mm}
            \includegraphics[width=0.45\textwidth]{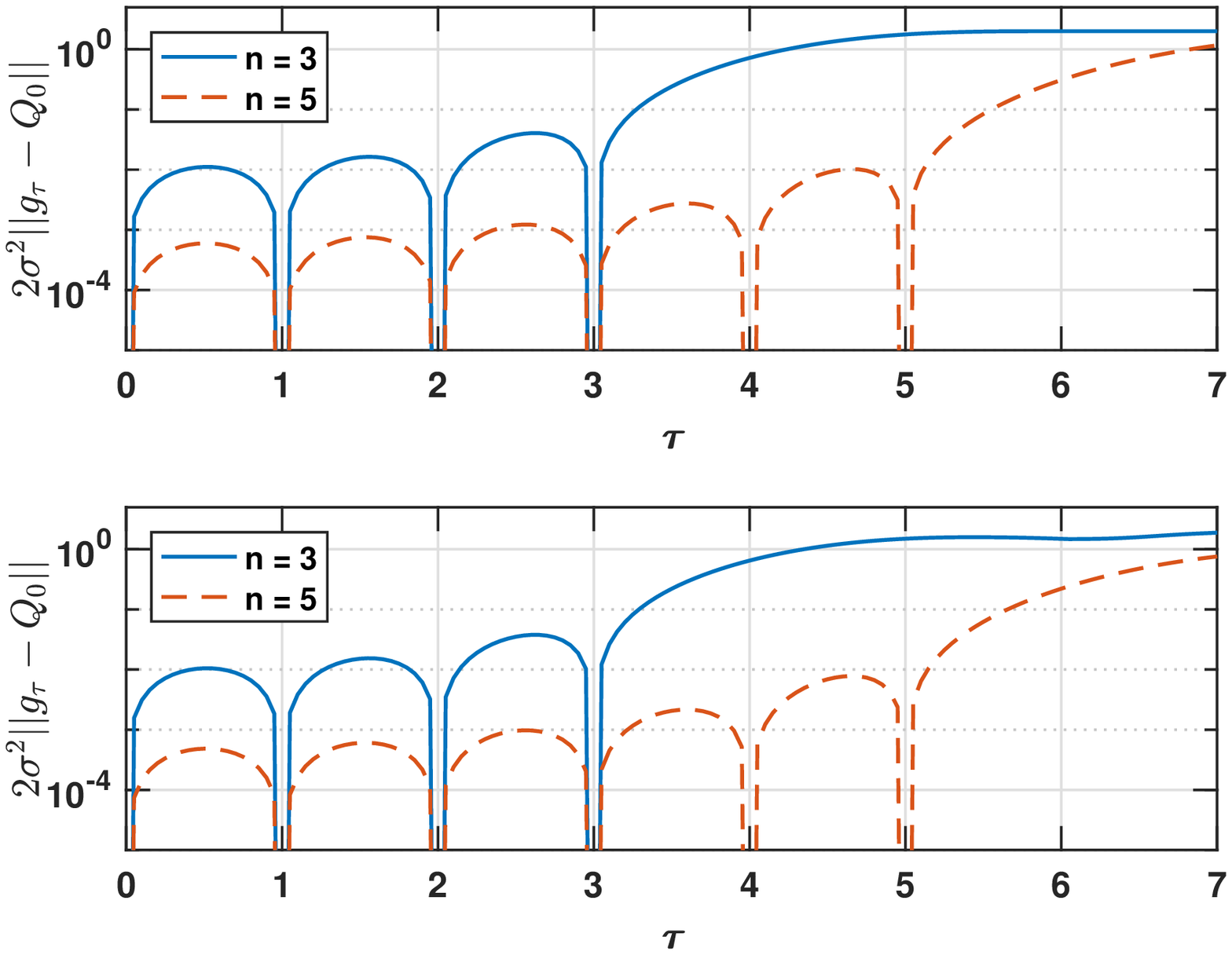}
            \vspace{.75mm}
           \caption{Covariance uncertainty bound $2\sigma^2\norm{g_\tau - Q_0}$ as function of $\tau$ for $n = 3$ and $n = 5$. Top panel: $\freqband = [-0.3,0.3]$. Bottom panel: $\freqband = [-0.3,-0.1] \cup [0.05,0.3]$.}
            \label{fig:cov_uncertainty}
\end{figure}
%
In this section, we provide some illustrations of the upper bound on covariance uncertainty as given by Theorem~\ref{thm:main_thm}. In particular, we show both cases where the bound is not tight and cases supporting the statement of Conjecture~\ref{conj:equality}. The practical computations are performed by discretizing the frequency axis on the set $\freqband$ and then solving \eqref{eq:approx_problem} using the general-purpose convex optimization package CVX \cite{cvx_new}. To check if the bound is sharp, we approximate the exact uncertainty \eqref{eq:unspecified_moment_problem} by solving \eqref{eq:exact_bound_with_tideous_computation} for a fine grid on $\mathbb{T}$.

We here consider two different scenarios; in the first $\freqband = [-0.3,0.3]$, and in the second $\freqband = [-0.3,-0.1] \cup [0.05,0.3]$. It may here be noted that the first scenario conforms with the conditions of Conjecture~\ref{conj:equality}, and we therefore expect the bound to be sharp, whereas this is not the case for the second scenario. Without loss of generality, we fix $\sigma^2 = 1$.

Figure~\ref{fig:cov_uncertainty} displays the bound \eqref{eq:approx_problem} for $\tau \in [0,7]$ for the two scenarios for $n = 3$ and $n = 5$. As can be seen, as $\tau$ increases beyond $n$, the bound approaches the trivial bound $2\sigma^2 = 2$. It may here be noted that the bound does not necessarily have local maxima located exactly in the middle between two specified covariances. For example, for $n = 5$, the maximal bound for $\tau \in [4,5]$ is not at $\tau = 4.5$ but slightly higher.
Although the bounds for the scenarios of interval and non-interval $\freqband$ behave qualitatively similar, a difference appears when considering the gap between the bound in \eqref{eq:approx_problem} and the approximation in \eqref{eq:exact_bound_with_tideous_computation} when solved for a fine grid of $\phi \in \mathbb{T}$, i.e.,
\begin{align*}
	2\sigma^2\norm{g_\tau - Q_0} - \text{Re } \left(\phi_0\int_{\freqband} e^{i2\pi \freq \tau} (d\mu_0(\freq) - d\nu_0(\theta))\right),
\end{align*}
where $(\mu_0,\nu_0)$ is an optimal pair for \eqref{eq:exact_bound_with_tideous_computation} for the maximizing $\phi_0 \in \mathbb{T}$. This gap is displayed in Figure~\ref{fig:gap} for the same scenarios as in Figure~\ref{fig:cov_uncertainty} . As can be seen, for the case of $\freqband$ being an interval, the empirical gap is erratic and small enough to be attributed to the tolerance of the numerical solver. In contrast, for the case of an asymmetric $\freqband$, the empirical gap becomes relatively large when the lag $\tau$ increases beyond the largest specified lag, as well as appears to be fairly smooth as a function of $\tau$. It may however be noted that for $\tau \leq n$, the gap between the bound and the exact uncertainty appears small.
Taken together, this gives empirical support for Conjecture~\ref{conj:equality}, i.e., the bound \eqref{eq:approx_problem} is sharp when $\freqband$ is an interval.
%
\begin{figure}[t!]
        \centering
        \vspace{1mm}
            \includegraphics[width=0.45\textwidth]{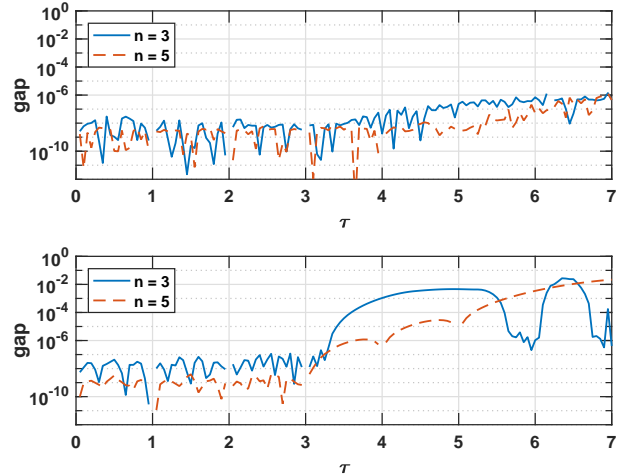}
            \vspace{.75mm}
           \caption{The gap between the bound \eqref{eq:approx_problem} and the maximal value of \eqref{eq:exact_bound_with_tideous_computation} when solved on a fine grid of $\phi \in \mathbb{T}$ for $n = 3$ and $n = 5$. Top panel: $\freqband = [-0.3,0.3]$. Bottom panel: $\freqband = [-0.3,-0.1] \cup [0.05,0.3]$.}
            \label{fig:gap}
\end{figure}

%
%
\section{Conclusions and future work}
In this work, we have shown that the maximal discrepancy between any two covariance functions corresponding to signals of a certain bandwidth may be bounded from above by a finite-dimensional convex program. Furthermore, we have empirically demonstrated that for the case of signal bands that are intervals, the bound appears to be sharp. Proving this rigorously, as well as extending the results to general spatio-temporal covariance functions, is planned for the future.
\newpage
\bibliographystyle{IEEEbib}
\bibliography{IEEEabrv,ElvanderKW22_icassp_arxiv.bbl}
%
\end{document}